\begin{document}

% DO NOT REMOVE: Creates space for Elsevier logo, ScienceDirect logo
% and ENDM logo
\begin{verbatim}\end{verbatim}\vspace{2.5cm}

\begin{frontmatter}

\title{Characterization by forbidden induced subgraphs of some subclasses of chordal graphs$^0$}

\author{S\'{e}rgio H. Nogueira\thanksref{ALL}\thanksref{myemail}}
\address{IEF , PPGMMC \\  Universidade Federal de Vi\c cosa, Centro Federal de Educa\c c\~ao Tecnol\'ogica de Minas Gerais\\ Belo Horizonte, Brazil}

\author{ Vinicius F. dos Santos\thanksref{ALL}\thanksref{coemail}}
\address{DCC, PPGMMC\\  Universidade Federal de Minas Gerais,  Centro Federal de Educa\c c\~ao Tecnol\'ogica de Minas Gerais\\
   Belo Horizonte, Brazil} \thanks[ALL]{ Partially supported by CAPES, CNPq and Fapemig} \thanks[myemail]{Email:
   \href{sergiohnog@gmail.com} {\texttt{\normalshape
   sergiohnog@gmail.com}}} \thanks[coemail]{Email:
   \href{viniciussantos@dcc.ufmg.br} {\texttt{\normalshape
   viniciussantos@dcc.ufmg.br}}}

\begin{abstract}
Chordal graphs are the graphs in which every cycle of length at least four has a chord. A set $S$ is a vertex separator for vertices $a$ and $b$ if the removal of $S$ of the graph  separates $a$ and $b$ into distinct  connected components.
A graph \emph{G} is \emph{chordal} if and only if every minimal vertex separator is a clique.
We study subclasses of chordal graphs defined by restrictions imposed on   the intersections of its minimal separator cliques. 
Our goal is to characterize them by forbidden induced subgraphs. Some of these classes have already been studied such as chordal graphs in which two minimal separators have no empty intersection if and only if they are equal.
Those graphs are known as \emph{strictly chordal graphs} and they were first introduced as block duplicate graphs by Golumbic and Peled ~\cite{golumbic}, they  were also considered in ~\cite{kennedy} and ~\cite{pablo2}, showing that strictly chordal graphs are exactly the (gem, dart)-free graphs.\footnote[0]{Accepted for presentation at EURO/ALIO 2018}

\end{abstract}

\begin{keyword}
chordal graphs, forbidden induced subgraphs, minimal vertex separators.
\end{keyword}

\end{frontmatter}

\section{Introduction}\label{intro}
Throughout this text, a graph is always simple, finite and undirected. For sets $R$ and $S$, we denote $R \subseteq S$ if $R$ is a subset of $S$, and $R \subset S$ if $R$ is a proper subset of $S$. For a graph $G$ we denote the set of vertices by $V(G)$ and the set of edges by $E(G)$.
%    A subset  $S \subset V (G)$ is a \emph{separator} of $G$ if two vertices of the same connected component of $G$ are in two distinct connected components of $G\setminus S$, where $G \setminus S$ denotes the induced subgraph $G[V(G) \setminus S]$. A set $S \subset V (G)$ is a \emph{minimal separator} of $G$ if  $S$ is a separator  and no proper subset of $S$ is a separator of $G$.
 Given two non-adjacent vertices $u$ and $v$ in the same connected component of $G$, a \emph{uv-separator} is a set $S$ contained in $V(G)$ such that $u$ and $v$ are in different connected components of $G \setminus S$. This separator $S$ is minimal if no proper subset of $S$ is also a \emph{uv-separator}. We will just say \emph{minimal vertex separator} to refer to a set $S$ that is a $uv$-minimal separator for some pair of non-adjacent vertices $u$ and $v$ in $G$.
    A graph is \emph{chordal} if it has no cycle of length at least four as induced graph.
    A \emph{clique} is a maximal set of pairwise adjacent vertices.
    A \emph{clique tree} of a connected chordal graph is any tree $\mathcal{T}$ whose vertices are the  cliques of $G$ such that for every two cliques $C_1,C_2$ each clique on the path from $C_1$ to $C_2$ in $\mathcal{T}$ contains $C_1\cap C_2$.

In this paper we study the relationship between the family of minimal vertex separators and the structure of the graph. In particular we characterize the graph classes arising from properties imposed on the family of minimal vertex separators. 

An important characterization of chordal graphs is due to Dirac:
\begin{theorem}[\cite{dirac}]
A graph $G$ is chordal if and only if every minimal vertex separator of $G$ is a clique.
\end{theorem}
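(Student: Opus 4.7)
The plan is to prove the two implications separately, using a path-concatenation/chord argument for the forward direction and a direct argument based on an induced long cycle for the converse.

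For the \emph{only if} direction, I would fix a chordal graph $G$ and a minimal $uv$-separator $S$, and show that any two vertices $s_1, s_2 \in S$ are adjacent. Let $C_u$ and $C_v$ denote the connected components of $G \setminus S$ containing $u$ and $v$, respectively. By the minimality of $S$, each $s_i \in S$ must have a neighbor in both $C_u$ and $C_v$; otherwise $S \setminus \{s_i\}$ would still separate $u$ from $v$. I would then pick a shortest path $P_u$ from $s_1$ to $s_2$ with all internal vertices in $C_u$, and analogously a shortest path $P_v$ through $C_v$, and concatenate them into a cycle $\gamma$ of length at least $4$. Chordality forces $\gamma$ to admit a chord; because $P_u$ and $P_v$ are each chosen to be shortest, neither contributes an internal chord, and because $C_u$ and $C_v$ lie in distinct components of $G \setminus S$, no chord can join them. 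Hence the only possible chord is $s_1 s_2$ itself, giving adjacency.

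For the \emph{if} direction I would argue by contrapositive: suppose $G$ is not chordal and exhibit a minimal separator that is not a clique. Take an induced cycle $v_1 v_2 \dots v_k v_1$ with $k \geq 4$ and let $S$ be any minimal $v_1 v_3$-separator. Since the two internally disjoint cycle-paths $v_1 v_2 v_3$ and $v_1 v_k v_{k-1} \dots v_4 v_3$ must each be intersected by $S$, I obtain $v_2 \in S$ and $v_j \in S$ for some $j \in \{4, \dots, k\}$. As the cycle is induced and $k \geq 4$, the vertices $v_2$ and $v_j$ are non-adjacent in $G$, so $S$ is not a clique.

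I expect the delicate step to be the chord analysis in the forward direction: the argument hinges on choosing $P_u$ and $P_v$ to be shortest (to eliminate internal chords) and on observing that any chord between $P_u$ and $P_v$ would correspond to an edge crossing the separator, contradicting the fact that $C_u$ and $C_v$ are distinct components of $G \setminus S$. The converse, by contrast, is essentially bookkeeping once the right pair of non-adjacent cycle vertices is selected as the pair to separate.
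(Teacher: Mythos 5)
This statement is quoted in the paper as Dirac's theorem with a citation to \cite{dirac}; the paper itself gives no proof, so there is nothing to compare against. Your argument is the standard, correct proof of this result: the forward direction via two shortest $s_1$--$s_2$ paths through the components $C_u$ and $C_v$, whose concatenation is a cycle of length at least $4$ (each path has at least one internal vertex since $s_1,s_2\notin C_u\cup C_v$) admitting no chord other than $s_1s_2$; and the converse by taking a minimal $v_1v_3$-separator of an induced cycle of length at least $4$ and noting it must contain the non-adjacent vertices $v_2$ and some $v_j$ with $4\le j\le k$. Both implications are sound as written.
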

Two  cliques $C_1,C_2$ of G form a  \emph{ separating pair} if $C_1 \cap C_2 $ is non-empty, and every path in G
from a vertex of $C_1 \setminus  C_2$ to a vertex of $C_2\setminus C_1$ contains a vertex of $C_1 \cap C_2$.
Every minimal separator of a chordal graph G is a clique and moreover, it is precisely the intersection of two cliques, as follows.

\begin{theorem}[\cite{habib}]
A set $S$ is a minimal separator of a chordal graph $G$ if and only if there exist maximal cliques $C_1,C_2$ forming a separating pair such that $S=C_1\cap C_2$.
  \end{theorem}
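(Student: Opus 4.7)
The proof naturally splits into two implications, with the forward direction carrying essentially all of the content.

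For the reverse direction, suppose $C_1, C_2$ are maximal cliques forming a separating pair, and set $S = C_1 \cap C_2$. Pick any $a \in C_1 \setminus C_2$ and $b \in C_2 \setminus C_1$. The separating-pair hypothesis forces $a$ and $b$ to be non-adjacent (otherwise the edge $ab$ would be a path from $C_1 \setminus C_2$ to $C_2 \setminus C_1$ avoiding $S$) and tells us directly that $S$ is an $ab$-separator. Minimality is then immediate: for each $s \in S$, the vertex $s$ lies in both $C_1$ and $C_2$, so the path $a,s,b$ witnesses that $S \setminus \{s\}$ no longer separates $a$ from $b$.

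For the forward direction, let $S$ be a minimal $uv$-separator; Dirac's theorem gives that $S$ is a clique. Let $A$ and $B$ be the connected components of $G \setminus S$ containing $u$ and $v$, respectively. The key step is to produce a vertex $a^\star \in A$ adjacent to every vertex of $S$, and symmetrically a vertex $b^\star \in B$. Granted this, $S \cup \{a^\star\}$ is a clique that extends to a maximal clique $C_1$ of $G$; since no vertex outside $S \cup A$ can be adjacent to $a^\star$ (as $S$ separates $A$ from the rest of $G$), we have $C_1 \subseteq S \cup A$, and symmetrically $C_2 \subseteq S \cup B$. It then follows that $C_1 \cap C_2 = S$ (because $A \cap B = \emptyset$), and any path from $C_1 \setminus C_2 \subseteq A$ to $C_2 \setminus C_1 \subseteq B$ must intersect $S$, which is exactly the separating-pair property.

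The main obstacle is establishing the existence of $a^\star$. I would argue as follows. Minimality of $S$ guarantees that every $s \in S$ has a neighbor in $A$. For any two $s, s' \in S$, take a shortest $ss'$-path through $A$, say $s, a_1, \ldots, a_k, s'$; together with the edge $ss'$ (which exists because $S$ is a clique) this forms a cycle of length $k+2$, and chordality combined with the minimality of the path forces $k = 1$, so any two vertices of $S$ have a common neighbor in $A$. An inductive strengthening of this pairwise statement, or an appeal to the Helly property of the subtrees of a clique tree of $G$, then upgrades it to the existence of a common neighbor in $A$ for all of $S$. Alternatively, the entire forward direction can be deduced from the clique-tree characterization of chordal graphs: $S$ is the intersection of two adjacent maximal cliques $C_1, C_2$ in some clique tree $\mathcal{T}$ of $G$, and the separating-pair condition is then an immediate consequence of the fact that removing the edge $C_1C_2$ from $\mathcal{T}$ induces a partition of $V(G) \setminus S$ into the two sides of the separation.
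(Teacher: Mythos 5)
The paper does not actually prove this statement---it is quoted from Habib and Stacho---so there is no in-paper argument to compare against; I can only assess your proposal on its own terms. Your reverse direction is complete and correct, and the architecture of the forward direction (find $a^\star\in A$ and $b^\star\in B$ adjacent to all of $S$, extend $S\cup\{a^\star\}$ and $S\cup\{b^\star\}$ to maximal cliques, observe these stay inside $S\cup A$ and $S\cup B$ respectively) is the right one.

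The genuine gap is exactly at the point you flag as ``the main obstacle'': passing from ``every two vertices of $S$ have a common neighbour in $A$'' to ``some vertex of $A$ is adjacent to all of $S$.'' Neither of your proposed upgrades works as stated. Pairwise intersection of the sets $N(s)\cap A$ does not by itself yield a common element, and the Helly property of subtrees of a clique tree does not apply to this family: applying Helly to the subtrees $T_s$ ($s\in S$) only recovers the fact that $S$ lies in some maximal clique $C^*$, and the vertices of $C^*\setminus S$ may lie in a component of $G\setminus S$ other than $A$, so you do not get a neighbour in the component you need. The alternative route via ``$S$ is the intersection of two adjacent cliques in some clique tree'' assumes a statement essentially equivalent to the theorem being proved (the Blair--Peyton characterization), so it is not an independent proof. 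The lemma can be closed with an extremal argument in the spirit of your shortest-path computation: choose $a^\star\in A$ adjacent to the maximum number of vertices of $S$, suppose $s\in S$ is a non-neighbour, take a shortest path $a^\star=a_0,a_1,\dots,a_r,s$ with interior in $A$, and for each $t\in S$ adjacent to $a_0$ consider the cycle $t,a_0,\dots,a_r,s,t$; repeated application of chordality (all chords other than $ta_i$ being excluded by path minimality) forces $t$ to be adjacent to $a_r$, so $a_r$ beats $a^\star$, a contradiction. With that lemma in place the rest of your forward direction goes through verbatim.
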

A chordal graph can also be characterized using clique trees as follows:
\begin{theorem}[\cite{gavril}]
A graph is chordal if and only if it has a clique tree.
\end{theorem}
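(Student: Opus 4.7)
The plan is to prove both directions. For the forward direction I will induct on $|V(G)|$ using a simplicial vertex; for the converse I will show the contrapositive by extracting from any induced cycle of length $\geq 4$ a configuration that violates the clique-tree running-intersection property.

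For $(\Rightarrow)$, recall that every chordal graph has a simplicial vertex $v$ (its neighborhood $N(v)$ is a clique). Set $G' = G - v$; then $G'$ is chordal and by induction admits a clique tree $\mathcal{T}'$. Since $v$ is simplicial, the only maximal clique of $G$ containing $v$ is $N[v] := N(v) \cup \{v\}$. Two cases arise. If $N(v)$ is itself a maximal clique of $G'$, I obtain $\mathcal{T}$ from $\mathcal{T}'$ by relabelling the node $N(v)$ as $N[v]$. If instead $N(v)$ is strictly contained in some maximal clique $C'$ of $G'$, then $N[v]$ is a new maximal clique of $G$ and I obtain $\mathcal{T}$ by attaching $N[v]$ as a leaf to $C'$. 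In either case I verify the running-intersection property by checking that for every vertex $u$ the set of nodes of $\mathcal{T}$ containing $u$ stays connected: for $u \neq v$ this follows directly from the inductive hypothesis applied to $\mathcal{T}'$ (in the second case, noting that $u \in N[v]$ implies $u \in C'$, so the added leaf hangs off a node of $u$'s subtree), and for $u = v$ it is trivial since $v$ belongs to exactly one node.

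For $(\Leftarrow)$, suppose $G$ has a clique tree $\mathcal{T}$. First reformulate the defining property: for every $v \in V(G)$, the subgraph $\mathcal{T}_v \subseteq \mathcal{T}$ induced by the cliques containing $v$ is a subtree, because if $C_1, C_2 \ni v$ then every clique on the $C_1$–$C_2$ path contains $v \in C_1 \cap C_2$. Also, $uv \in E(G)$ iff $u$ and $v$ share a maximal clique iff $\mathcal{T}_u \cap \mathcal{T}_v \neq \emptyset$. Now assume for contradiction that $v_1 v_2 \cdots v_k v_1$ is an induced cycle with $k \geq 4$, and write $\mathcal{T}_i := \mathcal{T}_{v_i}$. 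Root $\mathcal{T}$ at an arbitrary node and, for each $i$, let $r_i$ be the node of $\mathcal{T}_i$ closest to the root. Choose $j$ so that $r_j$ has maximum depth among $r_1,\dots,r_k$. For any $i$ with $\mathcal{T}_i \cap \mathcal{T}_j \neq \emptyset$, pick a shared node $x$: both $r_i$ and $r_j$ are ancestors of $x$, hence comparable, and by the depth-maximality of $r_j$ we get that $r_j$ is a descendant of $r_i$, so $r_j$ lies on the $x$-to-$r_i$ path, which is contained in $\mathcal{T}_i$. Applying this to the two neighbours of $v_j$ in the cycle yields $r_j \in \mathcal{T}_{j-1} \cap \mathcal{T}_{j+1}$, forcing $v_{j-1}v_{j+1}$ to be an edge of $G$ — contradicting $k \geq 4$.

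The main obstacle is the converse direction: one needs the right invariant (the ``topmost'' node of each subtree under an arbitrary rooting) to convert a cyclic combinatorial constraint into a tree-structural contradiction. The forward direction is conceptually routine once a simplicial vertex is in hand, but the bookkeeping in the case analysis (whether $N(v)$ was already maximal in $G'$) must be done carefully to ensure that the maximal cliques of $G$ are in exact bijection with the nodes of $\mathcal{T}$.
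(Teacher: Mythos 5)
The paper does not prove this statement at all; it is quoted as a known result with a citation to Gavril, so there is no in-paper argument to compare against. Your proof is correct and is essentially the standard one: the forward direction by induction on a simplicial vertex (with the right two-case bookkeeping on whether $N(v)$ was already a maximal clique of $G-v$), and the converse via the ``topmost node of each subtree under an arbitrary rooting'' argument, which in fact proves the stronger fact that any intersection graph of subtrees of a tree is chordal. The one step worth making explicit in the converse is why $r_i$ is an ancestor of every node $x$ of $\mathcal{T}_i$: the $x$--$r_i$ path lies in $\mathcal{T}_i$, and if it passed through a proper ancestor of $r_i$ that ancestor would have smaller depth, contradicting the choice of $r_i$; with that observed, the comparability of $r_i$ and $r_j$ and the final contradiction with $v_{j-1}v_{j+1}\notin E(G)$ go through exactly as you wrote.
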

 Let $G$ be a chordal graph and let  $\mathcal{T}(G)$ be a clique tree of $G$. The edges of $\mathcal{T}(G)$ can be labeled
  as the intersection of the endpoints, and these labels are exactly the minimal vertex separators.
We denote by $\mathbf{S}_\mathcal{T}(G)$ the multiset of labels  of the edges of $\mathcal{T}(G)$.
A graph can have several  distinct clique trees and when we are studying chordal graphs, we have the next result   giving  us that the multiset  $\mathbf{S}_\mathcal{T}(G)$  is independent of the clique tree.

\begin{theorem}[\cite{blair}]\label{separadores}
Let $G$ be a chordal graph. The multiset $\mathbf{S}_\mathcal{T}(G)$ of minimal vertex separators of $G$ is the same for every clique tree  $\mathcal{T}(G)$.
\end{theorem}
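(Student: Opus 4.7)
The plan is to show that for every minimal vertex separator $S$ of $G$, its multiplicity as an edge label in any clique tree $\mathcal{T}$ depends only on $G$ and not on $\mathcal{T}$. Specifically, I would prove that this multiplicity equals $f(S)-1$, where $f(S)$ denotes the number of \emph{full components} of $G\setminus S$ with respect to $S$, i.e.\ connected components $C$ of $G\setminus S$ such that every vertex of $S$ has a neighbor in $C$. Since $f(S)$ is manifestly a graph invariant, summing over all minimal separators yields the invariance of the multiset $\mathbf{S}_\mathcal{T}(G)$.

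First I would invoke the standard induced-subtree characterization of clique trees: for each $v\in V(G)$, the cliques containing $v$ induce a subtree $\mathcal{T}_v$ of $\mathcal{T}$. A direct consequence is that for any edge $e=C_iC_j$ of $\mathcal{T}$, deleting $e$ splits $\mathcal{T}$ into subtrees $\mathcal{T}_1,\mathcal{T}_2$ with $V_\ell:=\bigcup_{C\in\mathcal{T}_\ell}C$ satisfying $V_1\cap V_2=C_i\cap C_j$, and this intersection is a minimal $uv$-separator for any $u\in V_1\setminus V_2$ and $v\in V_2\setminus V_1$.

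Next, fix a minimal separator $S$ with multiplicity $k$ in $\mathbf{S}_\mathcal{T}(G)$. Removing the $k$ edges labeled $S$ from $\mathcal{T}$ produces a forest with components $\mathcal{F}_1,\dots,\mathcal{F}_{k+1}$; let $W_i=\bigcup_{C\in\mathcal{F}_i}C$. Iterating the observation above, $W_i\cap W_j=S$ for $i\neq j$, and the subtree property applied to each $\mathcal{T}_u$ with $u\notin S$ shows that the sets $W_1\setminus S,\dots,W_{k+1}\setminus S$ partition $V(G)\setminus S$. I would then establish a bijection between these pieces and the full components of $G\setminus S$. Fullness is easy: each $\mathcal{F}_i$ contains an endpoint $C$ of a removed $S$-labeled edge, hence a clique with $S\subsetneq C$, so any $x\in C\setminus S\subseteq W_i\setminus S$ is adjacent to every vertex of $S$.

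The main obstacle will be showing that each $W_i\setminus S$ is \emph{connected} in $G\setminus S$, and not merely a disjoint union of full components. For this I would take $x,y\in W_i\setminus S$ lying in cliques $C_x,C_y\in\mathcal{F}_i$ and walk along the clique path from $C_x$ to $C_y$ in $\mathcal{F}_i$, exploiting two facts: (i) no maximal clique can be contained in $S$ (because the two cliques realizing $S$ as their intersection both properly contain $S$, so if $C\subseteq S$ then $C$ would fail to be maximal), so $C\setminus S\neq\emptyset$ for every $C$ along the path; and (ii) when the label on a consecutive edge lies entirely inside $S$, I would use the chordality of $G$ together with the subtree property to reroute through a nearby clique that provides a vertex outside $S$ linking the two sides. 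With this connectivity in hand, the pieces $W_i\setminus S$ are precisely the full components of $G\setminus S$, giving $k+1=f(S)$, and hence the theorem.
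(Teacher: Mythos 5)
The paper does not prove this statement; it is quoted from Blair and Peyton, so your proposal can only be judged on its own terms. Your overall strategy --- showing that the multiplicity of each label $S$ equals $f(S)-1$ with $f(S)$ the number of full components of $G\setminus S$ --- is exactly the standard route to this invariance result, and most of the scaffolding (the induced-subtree property, $W_i\cap W_j=S$, the partition of $V(G)\setminus S$ by the sets $W_i\setminus S$, and the existence of a full component inside each $W_i\setminus S$) is correct.

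However, the step you yourself flag as the main obstacle is not merely hard: the claim is false. The sets $W_i\setminus S$ need not be connected in $G\setminus S$. Take $V(G)=\{x,y,a,b,d\}$ with triangles $\{x,y,a\}$ and $\{x,y,b\}$ and a pendant vertex $d$ adjacent only to $x$. The maximal cliques are $\{x,y,a\}$, $\{x,y,b\}$, $\{x,d\}$, and one clique tree is the path $\{x,d\}-\{x,y,a\}-\{x,y,b\}$. For $S=\{x,y\}$ (multiplicity $1$), removing the $S$-labeled edge gives $W_1\setminus S=\{a,d\}$, which is an independent set in $G\setminus S$: it is the union of the full component $\{a\}$ and the non-full component $\{d\}$. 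So no rerouting argument can establish connectivity, and the bijection as you state it ("the pieces $W_i\setminus S$ are precisely the full components") fails. (The same example refutes your parenthetical claim that $C_i\cap C_j$ is a minimal $uv$-separator for \emph{every} $u\in V_1\setminus V_2$, $v\in V_2\setminus V_1$: here $\{x,y\}$ separates $d$ from $b$ but is not a minimal $d$--$b$ separator, since $\{x\}$ already is one.) The statement you actually need, and which is true, is that each $W_i\setminus S$ contains \emph{exactly one full} component of $G\setminus S$, possibly together with non-full ones. "At least one" you already have. For "at most one", the standard argument is: if $D$ is a full component contained in $W_i\setminus S$, the subtrees $\{\mathcal{T}_s : s\in S\}$ together with the subtree of cliques meeting $D$ pairwise intersect, so by the Helly property for subtrees of a tree there is a maximal clique $K_D\in\mathcal{F}_i$ with $S\subset K_D\subseteq S\cup D$; if $D_1\neq D_2$ were two full components in $W_i\setminus S$, then $K_{D_1}\cap K_{D_2}=S$, and walking along the tree path from $K_{D_1}$ to $K_{D_2}$ (every clique on which contains $S$) one finds a consecutive pair whose intersection is exactly $S$, i.e.\ an $S$-labeled edge inside $\mathcal{F}_i$, a contradiction. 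With that replacement your proof closes; without it there is a genuine gap.
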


In the light of Theorem~\ref{separadores} from now on we omit the subscript and use simply~$\mathbf{S}(G)$.

A graph class $\mathcal{G}$ is hereditary if, for every $G \in \mathcal{G}$ and every induced subgraph $H$ of $G$, $H\in \mathcal{G}$.

Our goal is to characterize   hereditary subclasses  of chordal graphs by the intersection and containment relations of their minimal vertex separators and by forbidden induced subgraphs. 
%A hereditary property is a property of a graph which also holds for its induced subgraphs. 

As we shall see, not every restriction on the minimal vertex separators leads to a hereditary graph class. In order to obtain characterization by forbidden induced subgraphs we will impose the additional requirement of being hereditary.

\section{Results}

We start with an auxiliary result showing that minimal vertex separators are, in some sense, hereditary.
\begin{lemma}
\label{separador}
Let $G$ be a chordal graph, $S$ be a minimal vertex separator of $G$, $R$ be a proper subset of $S$ and $S'=S \setminus  R$. Then there exist cliques $C_1', C_2'$ in $ G \setminus R$ such that $S'$ separates $C_1'$ and $C_2'$.
\end{lemma}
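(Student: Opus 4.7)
My plan is to invoke the theorem of Habib cited in the excerpt to represent $S$ as the intersection of a separating pair of maximal cliques, and then transfer this structure to $G\setminus R$ by extending suitable vertices to maximal cliques there. Concretely, I would first write $S=C_1\cap C_2$ where $C_1,C_2$ are maximal cliques of $G$ forming a separating pair. Since $C_1\neq C_2$ and both are maximal, we may pick $a_1\in C_1\setminus C_2$ and $a_2\in C_2\setminus C_1$. Let $A_1$ and $A_2$ be the connected components of $G\setminus S$ containing $a_1$ and $a_2$ respectively; the separating pair property guarantees $A_1\neq A_2$.

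Next I would define $C_1'$ as any maximal clique of $G\setminus R$ containing $a_1$, and $C_2'$ analogously for $a_2$. Such maximal cliques exist because $a_1,a_2\notin R$ (they lie in $A_1,A_2$, which are disjoint from $S\supseteq R$). The key structural step is to argue that $C_i'\subseteq A_i\cup S'$ for $i=1,2$: every vertex of $C_i'$ is adjacent to $a_i$ in $G$, and because $A_i$ is a connected component of $G\setminus S$, every neighbour of $a_i$ in $G$ lies in $A_i\cup S$; since $C_i'$ avoids $R$, its vertices in fact lie in $A_i\cup(S\setminus R)=A_i\cup S'$.

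To finish, I would use the identity $(G\setminus R)\setminus S'=G\setminus S$, which holds because $R\cup S'=S$ and $R\subseteq S$. Under this identification, $C_1'\setminus S'\subseteq A_1$ and $C_2'\setminus S'\subseteq A_2$ lie in different connected components of $(G\setminus R)\setminus S'$. As $a_1\in C_1'\setminus S'$ and $a_2\in C_2'\setminus S'$ are both non-empty witnesses, and $S'$ itself is non-empty because $R\subsetneq S$, this gives exactly the claim that $S'$ separates $C_1'$ from $C_2'$ in $G\setminus R$.

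The main obstacle I anticipate is the control over the maximal cliques $C_1',C_2'$ in $G\setminus R$: a priori, removing $R$ from $G$ can merge vertices of $A_1$ and $A_2$ into a common maximal clique, which would ruin the separation. The argument that this cannot happen hinges on the observation that $A_1\cup S$ contains the full neighbourhood of $a_1$ in $G$, so no vertex outside this set can join $C_1'$; the analogous fact bounds $C_2'$. Once this containment is in place, the rest is just bookkeeping with the set identities $R\cup S'=S$ and $A_i\cap R=\emptyset$.
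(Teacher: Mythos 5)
Your proof is correct and follows essentially the same route as the paper: realize $S$ as the intersection of a separating pair $C_1,C_2$, pass to cliques of $G\setminus R$, and use the identity $(G\setminus R)\setminus S'=G\setminus S$ to transfer the separation. Your choice of $C_i'$ as a maximal clique anchored at a private vertex $a_i$, controlled by $N(a_i)\subseteq A_i\cup S$, is in fact more careful than the paper's version (which extends $C_i\setminus R$ to a clique and lifts paths without checking that the extension stays on the correct side of the separator); the only point the paper proves that you omit is that $S'$ is moreover a \emph{minimal} separator of $G\setminus R$ --- not demanded by the statement as written, but it is the form in which the lemma is used later.
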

\begin{proof}
Since $S$ is a minimal vertex separator of $G$, there exist cliques $C_1, C_2$ in $G$ such that $S=C_1\cap C_2$. Let $C_1', C_2'$  be cliques in $G'= G \setminus R$ such that $(C_i \setminus R )\subset C_i'$, $i=1,2$.
 $S'$ separates $C_1'$ and $C_2'$, because if there exists a path in $G'$ between a vertex $u_1 \in C_1'$ and $u_2 \in C_2'$ in $G'$, then there exists a path from a vertex of $C_1$ to a vertex of $C_2$ in $G \setminus S$, contradicting the fact that $S$ separates $C_1$  and $C_2$ in $G$. Now suppose that $S'$ is not minimal and let $R'$ be a non-empty subset of $S'$ such that $S'\setminus R'$ separates $C_1',C_2'$. Then $S\setminus R'$ separates  $C_1,C_2$, contradicting the fact that $S$ is minimal vertex  separator of $G$.
\end{proof}

In the  following, we will consider the graphs depicted in Figure 1:

\begin{center}
\begin{tikzpicture}
\draw (0,0) -- (1,0);
\draw (0,0) -- (1,0.5);
\draw(0,0) -- (1,-0.5);
\draw [fill] (0,0) circle [radius=0.05];
\draw[fill](1,0.5) circle [radius=0.05];
\draw[fill] (1,0) circle [radius=0.05];
\draw[fill] (1,-0.5) circle [radius=0.05];
\node[below] at (0.5,-1) {claw};
\draw (2,0) -- (3,0.25);
\draw (2,0) -- (3,-0.25);
\draw (2,0) -- (2.75,0.75);
\draw (2,0) -- (2.75,-0.75);
\draw [fill] (2,0) circle [radius=0.05];
\draw[fill](3,0.25) circle [radius=0.05];
\draw[fill] (2.75,0.75) circle [radius=0.05];
\draw[fill] (2.75,-0.75) circle [radius=0.05];
\draw[fill](3,-0.25) circle [radius=0.05];
\draw (2.75,0.75) -- (3,0.25);
\draw (3,0.25) -- (3,-0.25);
\draw (3,-0.25) -- (2.75,-0.75);
\node [below] at (2.5,-1) {gem};
\draw [fill] (4.5,-1) circle [radius=0.05];
\draw[fill](4.5,0.15) circle [radius=0.05];
\draw[fill] (4.5,1) circle [radius=0.05];
\draw[fill] (3.9,-0.15) circle [radius=0.05];
\draw[fill](5.1,-0.15) circle [radius=0.05];
\draw (4.5,-1) -- (4.5,0.15);
\draw (4.5,0.15) -- (4.5,1);
\draw (4.5,1) -- (3.9,-0.15);
\draw (4.5,1) -- (5.1,-0.15);
\draw (3.9,-0.15) -- (4.5,0.15);
\draw (4.5,0.15) -- (5.1,-0.15);
\node[below] at (4.5,-1) {dart};
\draw (6,-0.865) -- (7,-0.865);
\draw (6,-0.865) -- (6.5,0);
\draw (7,-0.865) -- (6.5,0);
\draw (7.5,0) -- (6.5,0);
\draw (7.5,0) -- (7,-0.865);
\draw (7.5,0) -- (8,-0.865);
\draw [fill] (6,-0.865) circle [radius=0.05];
\draw[fill](6.5,0) circle [radius=0.05];
\draw[fill] (7,-0.865) circle [radius=0.05];
\draw[fill]  (7,-0.865) circle [radius=0.05];
\draw[fill](7.5,0) circle [radius=0.05];
\draw[fill] (8,-0.865) circle [radius=0.05];
\draw[fill] (7,-0.865) -- (8,-0.865);
\draw[fill](7,0.865) circle [radius=0.05];
\draw (7,0.865) -- (6.5,0);
\draw (7,0.865) -- (7.5,0);
\node[below] at (7,-1) {Haj\'{o}s};
\draw (9,0) -- (10,0);
\draw(10,0) -- (11,0);
\draw (9,0.75) -- (10,0);
\draw (9,-0.75) -- (10,0);
\draw (10,0) -- (11,0);
\draw (10,0) -- (11,0.75);
\draw (10,0) -- (11,-0.75);
\draw (9,0.75) -- (9,-0.75);
\draw (11,0.75) -- (11,-0.75);
\draw [fill] (9,0) circle [radius=0.05];
\draw[fill](10,0) circle [radius=0.05];
\draw[fill] (11,0) circle [radius=0.05];
\draw[fill]  (9,0.75) circle [radius=0.05];
\draw[fill](11,0.75) circle [radius=0.05];
\draw[fill](9,-0.75) circle [radius=0.05];;
\draw[fill] (11,-0.75) circle [radius=0.05];
\node[below] at (10,-1) {butterfly};

\node at (5.5,-1.9) {Figure 1 - Forbidden Induced subgraphs considered in this work};
\end{tikzpicture}

\end{center}

\begin{lemma}
\label{disjuntos}
Let $G$ be a chordal graph. The following statements are equivalent:
\begin{itemize}
\item
i) For every   induced subgraph $ G'$  of $G$   and for every pair $S_i, S_j \in \mathbf{S}(G')$,  $S_i\cap S_j \neq \emptyset$;
\item
ii) $G$ is  $(P_4,2P_3)$-free.
\end{itemize}
\end{lemma}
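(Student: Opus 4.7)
The plan is to prove both implications by contraposition, relying on the standard characterization that a minimal vertex separator $S$ of a chordal graph is a clique such that $G\setminus S$ has at least two \emph{full} components (components meeting the neighborhood of every vertex of $S$). The direction (i) $\Rightarrow$ (ii) is then immediate: the path $P_4=abcd$ has the two disjoint minimal separators $\{b\}$ and $\{c\}$, and the graph $2P_3$ has the two middle vertices of its components as disjoint minimal separators, so any induced $P_4$ or $2P_3$ inside $G$ is itself an induced subgraph of $G$ witnessing a failure of (i).

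For the converse, since $(P_4,2P_3)$-freeness is hereditary, it suffices to show that any chordal $G$ possessing two disjoint minimal separators $S_1,S_2$ already contains an induced $P_4$ or $2P_3$. I split into two cases according to whether $S_1$ and $S_2$ lie in the same connected component of $G$. If they do not, then for each $i$ I pick $s_i\in S_i$ together with two neighbors of $s_i$ in distinct full components of (its component of) $G\setminus S_i$; these yield an induced $P_3$ centered at $s_i$, and since the two $P_3$'s lie in different components of $G$ their union is an induced $2P_3$.

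When $S_1$ and $S_2$ lie in the same component, each $S_j$ — being a clique disjoint from the other — lies in a single component $C_{S_j}$ of $G\setminus S_i$ for $i\neq j$. I choose a full component $A$ of $G\setminus S_1$ distinct from $C_{S_2}$ (possible because $G\setminus S_1$ has at least two full components) and, symmetrically, a full component $D$ of $G\setminus S_2$ distinct from $C_{S_1}$. Fix $s_1\in S_1$, $s_2\in S_2$, neighbors $a\in A$ of $s_1$ and $d\in D$ of $s_2$, and let $P$ be a shortest $s_1s_2$-path in $G$, which is necessarily induced. If $|P|\geq 3$, then $P$ already contains an induced $P_4$. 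If $|P|=1$, the four vertices $a,s_1,s_2,d$ span the required $P_4$: the non-edges $a\not\sim s_2$ and $s_1\not\sim d$ follow from the component choices, and $a\not\sim d$ is forced by $C_4$-freeness. If $|P|=2$, say $s_1u_1s_2$, the intermediate vertex $u_1$ must lie in $C_{S_1}\cap C_{S_2}$, and the five vertices $a,s_1,u_1,s_2,d$ form an induced $P_5$ (hence contain an induced $P_4$), where the one nontrivial non-edge $a\not\sim d$ is ruled out by $C_5$-freeness.

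The main obstacle is the same-component case: the vertex choices via full components make most non-adjacencies automatic, but the crux is invoking chordality (absence of induced $C_4$ and $C_5$) to kill the single potential chord that would otherwise short-circuit the desired induced $P_4$ or $P_5$. Once this is handled, reassembling the four possible configurations produces the required forbidden subgraph.
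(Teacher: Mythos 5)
Your strategy is genuinely different from the paper's: the paper argues inside a clique tree of $G'$ and splits on whether the tree edges labelled $S_1$ and $S_2$ are adjacent, extracting a $P_4$ or $2P_3$ from the corresponding maximal cliques, whereas you use the full-component characterization of minimal separators plus a shortest path between the two separators. The forward direction, the case where $S_1$ and $S_2$ lie in different components, and the subcases $|P|=1$ and $|P|\ge 3$ are all correct.

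There is, however, a genuine gap in the subcase $|P|=2$: your assertion that the middle vertex $u_1$ ``must lie in $C_{S_1}\cap C_{S_2}$'' needs $u_1\notin S_1\cup S_2$, and nothing in your setup guarantees this because $s_1$ and $s_2$ were fixed arbitrarily. Concretely, let $V(G)=\{a,b,x_1,x_2,y,p,q\}$ where $\{a,x_1,x_2\}$ and $\{b,x_1,x_2\}$ are cliques and $y$ is adjacent exactly to $x_1$, $p$ and $q$. This graph is chordal, and $S_1=\{x_1,x_2\}$ (a minimal $ab$-separator) and $S_2=\{y\}$ (a minimal $pq$-separator) are disjoint. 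Choosing $s_1=x_2$ and $s_2=y$ gives the shortest path $x_2x_1y$ with $u_1=x_1\in S_1$, hence $u_1\notin C_{S_2}$; taking $A=\{a\}$ and $D=\{p\}$, the sequence $a,x_2,x_1,y,p$ is not an induced $P_5$ because $a$ is adjacent to $x_1$. (The graph does contain an induced $P_4$, namely $a,x_1,y,p$, so the lemma is safe, but your construction does not find it.) The repair is routine: choose $s_1\in S_1$ and $s_2\in S_2$ realizing $\mathrm{dist}(S_1,S_2)$; then no internal vertex of a shortest $s_1s_2$-path can lie in $S_1\cup S_2$ (such a vertex would witness a shorter distance between the sets), and since $A$ and $D$ are full components, suitable neighbors $a$ and $d$ of the new $s_1$ and $s_2$ still exist, after which your argument goes through. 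As written, though, the step fails.
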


\begin{proof}
 Let $G$ be a graph satisfying $i)$   and suppose that $ii)$ is not valid.
%Let $G'$ be an induced sugraph of $G$ and let $v_1 v_2 v_3v_4$ be an induced $P_4$ of $G'$, with cliques $C_1=\{v_1,v_2\}$, $C_2=\{v_2,v_3\}$ and $C_3=\{v_3, v_4\}$. Then we have $S_1= C_1 \cap C_2=\{v_2\}$, $S_2=C_2 \cap C_3=\{v_3\}$ and  $S_1\cap S_2 = \emptyset$, a contradiction.
Suppose that there exists an induced subgraph $G'$ of $G$ isomorphic to $P_4$ with vertices $v_1, v_2, v_3,v_4$ and cliques $C_1=\{v_1,v_2\}$, $C_2=\{v_2,v_3\}$ and $C_3=\{v_3, v_4\}$. Then we have $S_1= C_1 \cap C_2=\{v_2\}$, $S_2=C_2 \cap C_3=\{v_3\}$ and  $S_1\cap S_2 = \emptyset$, a contradiction.
Now suppose we have an induced subgraph $G'$ of $G$ isomorphic to $2P_3$ with vertices $v_1v_2 v_3$ and $v_4v_5v_6$  and cliques $C_1=\{v_1,v_2\}$, $C_2=\{v_2,v_3\}$, $C_3=\{v_4, v_5\}$ and $C_4=\{v_5, v_6\}$. Then we have $S_1= C_1 \cap C_2=\{v_2\}$, $S_2=C_3 \cap C_4=\{v_5\}$ and  $S_1\cap S_2 = \emptyset$, a contradiction. Hence $G$ is $(P_4,2P_3)$-free.
Therefore $i) \Rightarrow ii)$.

Conversely, let $G$ be a graph satisfying $ii)$   and suppose that $i)$ is not valid.
Let $ G'$ be an  induced subgraph of $G$. Let $\mathcal{T'}$ be a clique tree of $G'$  and $\mathbf{S}(G')$ be the multiset of minimal vertex separators of $G'$. Suppose that there  exist $S_1, S_2 \in \mathbf{S}(G')$, with $S_1\cap S_2 = \emptyset$.
First, suppose that there  exist adjacent edges $e_1, e_2 \in E(\mathcal{T'})$ with labels $S_1,S_2$  and let $C_1,C_2,C_3$ be cliques such that $S_1=C_1\cap C_2$ and $ S_2=C_2\cap C_3$. Suppose $x\in S_1$ and $y\in S_2$. Since the cliques are maximal there must exist $a \in $ $C_1 \setminus C_2$ and $b \in $ $C_3 \setminus C_2$. Then $\{a,x,y,b\}$ induces a $P_4$.
Now suppose that there exist non-adjacent edges $e_1, e_2 \in E(\mathcal{T'})$  with labels $S_1,S_2$ and let $C_1,C_2,C_3, C_4$ be cliques such that $S_1=C_1\cap C_2$ and $ S_2=C_3\cap C_4$. Without loss of generality we can consider that the  path in $\mathcal{T'}$ from $\{C_1,C_2\}$ to $\{C_3,C_4\}$ contains $C_2$ and $C_3$. Suppose $x\in S_1$ and $y\in S_2$. Since the cliques are maximal we must have $a \in $ $C_1 \setminus C_2$, $b \in $ $C_4 \setminus C_3$ and  $c \in $ $C_2 \setminus C_1$. If $xy \in E(G')$ then $\{a,x,y,b\}$ is an  induced  $P_4$ of $G$; otherwise if $cy \in E(G')$ then $\{a,x,c,y\}$  induces a $P_4$; else  there exists $d \in C_3 \setminus \{a,b,c,x,y\}$. If $dx\in E(G') $ then $\{x,d,y,b\}$  is $P_4$; if $cd \in E(G')$ then $\{x,c,d,y\}$  is $P_4$  else $\{a,x,c\}$, $\{d,y,b\}$ induce $2P_3$, a contradiction. Therefore $ii) \Rightarrow i)$.
\end{proof}

\begin{lemma}
\label{iguais}
Let $G$  be a chordal graph. The following statements are equivalent:
\begin{itemize}
\item
i) For every $ G'$  induced subgraph of $G$   and  for every pair $S_i, S_j \in \mathbf{S}(G')$,  $S_i\neq S_j$.
\item
ii) $G$ is claw-free.
\end{itemize}
\end{lemma}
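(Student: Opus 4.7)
The plan is to prove both directions by contraposition, in the same spirit as Lemma~\ref{disjuntos}.

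For $(i) \Rightarrow (ii)$, suppose $G$ contains an induced claw with center $v$ and leaves $a_1, a_2, a_3$. Taking $G'$ to be the claw itself, its three maximal cliques are $\{v, a_1\}, \{v, a_2\}, \{v, a_3\}$; any clique tree is a path whose two edges are both labeled $\{v\}$, so $\{v\}$ appears twice in $\mathbf{S}(G')$, contradicting $(i)$.

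For $(ii) \Rightarrow (i)$, I would argue contrapositively: assume some induced subgraph $G'$ has two distinct entries $S_i, S_j \in \mathbf{S}(G')$ with $S_i = S_j = S$. Fix a clique tree $\mathcal{T}$ of $G'$; by Theorem~\ref{separadores} the label $S$ appears on two distinct edges $e_1 = C_1C_2$ and $e_2 = C_3C_4$ of $\mathcal{T}$, oriented so the path from $C_1$ to $C_4$ visits $C_2$ before $C_3$ (possibly with $C_2 = C_3$). I would then pick $s \in S$, $a \in C_1 \setminus S$, $c \in C_2 \setminus S$, $b \in C_4 \setminus S$ — each set difference is nonempty because distinct maximal cliques sharing $S$ must properly contain it — and claim that $\{s, a, b, c\}$ induces a claw centered at $s$. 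The three edges from $s$ follow immediately from $\{s,a\} \subseteq C_1$, $\{s,c\} \subseteq C_2$, and $\{s,b\} \subseteq C_4$.

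The delicate step — and the step I expect to be the main obstacle — is verifying that $a, b, c$ are pairwise non-adjacent (and distinct), for which I would invoke the running intersection property of clique trees: the maximal cliques containing any fixed vertex form a subtree of $\mathcal{T}$. Deleting $e_1$ and $e_2$ splits $\mathcal{T}$ into three subtrees $T_1 \ni C_1$, $T_2 \ni C_2, C_3$, $T_3 \ni C_4$. Since $a, b, c$ lie outside $S$, their clique-subtrees cannot cross either $S$-labeled edge and so are confined to $T_1, T_2, T_3$ respectively. Hence no two of $a, b, c$ share a maximal clique, making them pairwise non-adjacent and automatically distinct, so $\{s, a, b, c\}$ induces a claw in $G$. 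The uniform nature of this argument handles both the case where $e_1, e_2$ share a clique ($C_2 = C_3$) and the case where they are far apart in $\mathcal{T}$.
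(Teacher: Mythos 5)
Your proof is correct and follows essentially the same route as the paper's: both directions are done by contraposition, and the claw in the converse is built from a vertex of the repeated separator together with exclusive vertices of $C_1$, of the middle clique, and of $C_4$. The only difference is presentational --- the paper splits the converse into the cases of adjacent and non-adjacent clique-tree edges, whereas you handle both uniformly and justify the pairwise non-adjacency explicitly via the running intersection property, which the paper leaves implicit.
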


\begin{proof}
Let $G$ be a graph satisfying $i)$   and suppose that $ii)$ is not valid.
Let $G'$  be an induced subgraph of $G$ and let $\{x\}\{a,b,c\}$ be an induced claw of $G'$, with cliques $C_1=\{x,a\}, C_2=\{x,b\}, C_3=\{x,c\}$. Then we have $S_1=C_1 \cap C_2= \{x\}=C_2 \cap C_3=S_2$, a contradiction. Hence $G$ is claw-free and $i) \Rightarrow ii)$.

Conversely let $G$ be a graph satisfying $ii)$   and suppose that $i)$ is not valid. Let $G'$ be an induced subgraph of $G$.
Let  $\mathcal{T'}$ be a clique tree of $G'$, $\mathbf{S}(G')$ be the multiset of minimal vertex separators of $G'$ and suppose that  $\exists S_1, S_2 \in \mathbf{S}(G')$, with $S_1=S_2$.
First, suppose that there exist adjacent edges $e_1, e_2 \in E(\mathcal{T'})$ with labels $S_1,S_2$ and let $C_1,C_2,C_3$ be cliques such that $S_1=C_1\cap C_2$ and $ S_2=C_2\cap C_3$. Since the separators are equal, take $x\in S_1 \cap S_2$. Since the cliques are maximal there must exist $a \in $ $C_1 \setminus C_2$, $b \in $ $C_3 \setminus C_2$ and $c \in C_2 \setminus (C_1 \cup C_3)$. Then $\{x,a,b,c\}$ induces a claw, a  contradiction.
Now suppose that  $\nexists$ $e_1, e_2 \in E(\mathcal{T'})$ adjacent with labels $S_1,S_2$ and let $C_1,C_2,C_3, C_4$ be distinct cliques such that $S_1=C_1\cap C_2$ and $ S_2=C_3\cap C_4$.  Without loss of generality we can consider that every path in $\mathcal{T'}$ from $\{C_1,C_2\}$ to $\{C_3,C_4\}$ contains $C_2$ and $C_3$. Let $x\in \bigcap _{i=1,...,4} C_i$. Since the cliques are maximal we must have $a \in $ $C_1 \setminus C_2$, $b \in C_4 \setminus C_3$ and  $c \in $ $C_2 \setminus C_1$ and  then      $\{x,a,b,c\}$ induces a claw, a  contradiction. Hence   $ii) \Rightarrow i)$.
\end{proof}

Due to space restrictions the proofs of the following two lemmas will be omitted.

\begin{lemma}
\label{cont.estrito}
Let $G$  be a  chordal graph. The following statements are equivalent:
\begin{itemize}
\item
i) For every $ G'$  induced subgraph of $G$   and for every pair $S_i, S_j \in \mathbf{S}(G')$,  $S_i$ is not strictly contained in $S_j$;
\item
ii) $G$ is dart-free.
\end{itemize}
\end{lemma}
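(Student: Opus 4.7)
The plan is to prove both implications by contrapositive, in the spirit of Lemmas~\ref{disjuntos} and~\ref{iguais}.

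For $(i)\Rightarrow(ii)$ I would start from an induced dart $D$ in $G$, label its vertices so that $v_2$ has degree $4$, $v_3$ has degree $3$, $v_4,v_5$ have degree $2$, and $v_1$ is the pendant, and read off the maximal cliques $\{v_1,v_2\}$, $\{v_2,v_3,v_4\}$, and $\{v_2,v_3,v_5\}$. Then both $\{v_2\}$ and $\{v_2,v_3\}$ lie in $\mathbf{S}(D)$ with the former strictly contained in the latter, contradicting~(i).

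For $(ii)\Rightarrow(i)$ I would assume some induced subgraph $G'$ of $G$ has $S_1\subset S_2$ in $\mathbf{S}(G')$ and exhibit an induced dart inside $G'$. Fix a clique tree $\mathcal{T}'$ of $G'$ and, among all pairs of edges $(f,g)$ of $\mathcal{T}'$ with $\mathrm{label}(f)\subset\mathrm{label}(g)$, choose $(e_1,e_2)$ of minimum distance in $\mathcal{T}'$; set $S_1=\mathrm{label}(e_1)$ and $S_2=\mathrm{label}(e_2)$. My first step is to show that $e_1$ and $e_2$ are adjacent. If not, take any edge $e'$ strictly between them on the tree path, with label $S'$; every $z\in S_1\subseteq S_2$ lies in both endpoints of $e_1$ and of $e_2$, hence in every clique along the path (since the cliques containing $z$ form a subtree of $\mathcal{T}'$), and in particular in both endpoints of $e'$, giving $S_1\subseteq S'$. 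Then $S_1=S'$ yields the closer strictly-contained pair $(e',e_2)$, and $S_1\subset S'$ yields $(e_1,e')$, each contradicting minimality.

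Writing $e_1=C_1C_2$ and $e_2=C_2C_3$, I would then choose five vertices $x\in S_1$, $y\in S_2\setminus S_1$, $a\in C_1\setminus C_2$, $b\in C_3\setminus C_2$, and $c\in C_2\setminus(C_1\cup C_3)$, mirroring the selection used in Lemma~\ref{iguais} with the extra vertex $y$. Existence of $c$ uses maximality of $C_2$ together with $S_1\subseteq S_2\subseteq C_3$: some $c'\in C_2\setminus C_3$ exists by maximality, and $c'\in C_1$ would force $c'\in S_1\subseteq C_3$, a contradiction. These five vertices are distinct; the edges $ax$, $xy$, $xb$, $xc$, $yb$, $yc$ are immediate from the clique memberships (with $x\in S_1\subseteq C_3$ for $xb$); and each of the pairs $ab$, $ac$, $ay$, $bc$ is a non-edge by the standard clique-tree separation argument, since any common maximal clique would, by connectivity of subtrees in $\mathcal{T}'$, force one of the two vertices into $C_1$ or $C_3$, contradicting its choice. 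Hence $\{a,x,y,b,c\}$ induces a dart in $G'$, with $x$ of degree $4$, $y$ of degree $3$, $b,c$ of degree $2$, and $a$ the pendant. The hardest part is the minimum-distance reduction that makes $e_1$ and $e_2$ adjacent; once adjacency is secured the dart follows from the subtree property of $\mathcal{T}'$ just as in Lemmas~\ref{disjuntos} and~\ref{iguais}, with only the single additional vertex $y$ supplied by the strict containment $S_1\subset S_2$.
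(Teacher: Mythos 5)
Your proof is correct. Note that the paper itself omits the proof of Lemma~\ref{cont.estrito} ``due to space restrictions,'' so there is no official argument to compare against; what you have written is a complete proof in exactly the style of the paper's proofs of Lemmas~\ref{disjuntos} and~\ref{iguais} (contrapositive in both directions, reading separators off the maximal cliques of the forbidden subgraph for one implication, and a clique-tree edge analysis for the other). The one genuinely new ingredient you supply is the minimum-distance reduction showing that a strictly nested pair of labels can always be realized on \emph{adjacent} tree edges: since every vertex of $S_1$ lies in both endpoints of $e_1$ and of $e_2$, the subtree property forces $S_1\subseteq S'$ for any intermediate label $S'$, and either case ($S_1=S'$ or $S_1\subset S'$) produces a closer nested pair. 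This is cleaner than the paper's template for Lemmas~\ref{disjuntos} and~\ref{iguais}, which instead handles the non-adjacent case by an explicit ad hoc construction along the tree path; your reduction avoids that case entirely. The remaining verifications (existence of $c\in C_2\setminus(C_1\cup C_3)$ via $S_1\subseteq C_3$, the six edges from clique membership, and the four non-edges from the subtree-separation argument, which also gives $a\neq b$ since $C_1\cap C_3\subseteq C_2$) are all sound, and the five vertices do induce a dart.
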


%\begin{proof}
%Let $G$ satisfying $i)$   and suppose that $ii)$ is not valid.
%Let $G'$  be an induced subgraph of $G$ and let $axbcd$ be a induced dart of $ G'$.
%Let $C_1=\{x,a\}, C_2=\{x,b,c\}, C_3=\{x,b,d\}$ be its cliques. Then we have $S_1=C_1 \cap C_2=\{x\}$ $S_2=C_2 \cap C_3= \{x,b\}$ and  $S_1 \subset S_2$, a contradiction. Hence $G$ is dart-free and $i) \Rightarrow ii)$.

%Conversely let $G$ satisfying $ii)$   and suppose that $i)$ is not valid. Let $G'$  be an induced subgraph of $G$. Let $\mathcal{T'}$ be a clique tree of $G'$  and $\mathbf{S}(G')$ be the multiset of minimal vertex separators of $G'$
%and suppose that  $\exists S_1, S_2 \in \mathbf{S}$ with $S_1 \subset S_2$.
%In this case we can assume that $S_1,S_2$ are adjacent in $\mathcal{T'}$.
 %Let $C_1,C_2,C_3$ be cliques such that $S_1=C_1\cap C_2$ and $ S_2=C_2\cap C_3$. Let $x\in S_1 \cap S_2$ and $y \in S_2 \setminus S_1$. Since the cliques are maximal there must exist $a \in $ $C_1 \setminus C_2$, $b \in C_2 \setminus  C_3$ and $c \in C_3 \setminus C_2$. Note that $b\notin C_1$, because  $S_1\nsubseteq S_2$. Therefore $\{a,x,y,b,c\}$ induces a dart and $ii) \Rightarrow i)$.
%\end{proof}

\begin{lemma}
\label{n.contido}
Let $G$  be a chordal graph. The following statements are equivalent:
\begin{itemize}
\item
i) For every $ G'$  induced subgraph of $G$   and for every pair $S_i, S_j \in \mathbf{S}(G')$, $S_i \subseteq S_j$ or  $S_j \subseteq S_i$;
\item
ii) $G$ is (gem, butterfly)-free.
\end{itemize}
\end{lemma}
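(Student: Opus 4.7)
The plan is to mirror the case analysis used in Lemmas~\ref{disjuntos} and~\ref{iguais}: one direction by computing explicit separators in each of the two forbidden graphs, and the other by extracting a gem or a butterfly from a clique tree whose labels violate the separator condition.

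For $i)\Rightarrow ii)$, it is enough to exhibit an incomparable pair of minimal separators in each of the gem and the butterfly, viewed as induced subgraphs of themselves. The maximal cliques of the gem are $\{u,v_1,v_2\}$, $\{u,v_2,v_3\}$, $\{u,v_3,v_4\}$, producing the separators $\{u,v_2\}$ and $\{u,v_3\}$, which are incomparable. The maximal cliques of the butterfly are $\{c,u_1,u_2\}$, $\{c,u_2,u_3\}$, $\{c,v_1,v_2\}$, $\{c,v_2,v_3\}$, producing (among others) the separators $\{c,u_2\}$ and $\{c,v_2\}$, which are also incomparable. Hence an induced gem or butterfly in $G$ immediately refutes $i)$.

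For $ii)\Rightarrow i)$, I argue by contrapositive. Assume some induced subgraph $G'\subseteq G$ has $S_1,S_2\in\mathbf{S}(G')$ with $S_1\not\subseteq S_2$ and $S_2\not\subseteq S_1$. Fix a clique tree $\mathcal{T'}$ of $G'$ and cliques $C_1,C_2,C_3,C_4$ realizing $S_1=C_1\cap C_2$ and $S_2=C_3\cap C_4$ (allowing $C_2=C_3$ in the adjacent-edge case), then pick witnesses $x\in S_1\setminus S_2$, $y\in S_2\setminus S_1$, $a\in C_1\setminus C_2$, $b\in C_4\setminus C_3$. The clique-tree/separator reasoning used in Lemmas~\ref{disjuntos} and~\ref{iguais} forces the edges $ax,xy,yb$ and forbids $ay,xb,ab$, so $\{a,x,y,b\}$ induces a $P_4$. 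I then split on $S_1\cap S_2$. If $S_1\cap S_2\neq\emptyset$, a common element $w$ lies, by the subtree property of clique trees, in every $C_i$ and is therefore adjacent to each of $a,x,y,b$, so $\{w,a,x,y,b\}$ is an induced gem. If $S_1\cap S_2=\emptyset$, the target is a butterfly: choose a central vertex $c$ from an intermediate clique on the tree-path joining the edges carrying $S_1$ and $S_2$, and assemble two fan wings $K_1+P_3$ sharing $c$, using a second element of $S_1$ and a vertex of $C_1\setminus C_2$ on one side, and symmetrically a second element of $S_2$ and a vertex of $C_4\setminus C_3$ on the other side; Lemma~\ref{separador} is then used to pass down to the induced seven-vertex butterfly in $G$.

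The main obstacle lies in this disjoint case: one must verify that the disjointness of $S_1$ and $S_2$, together with the maximality of $C_1,\ldots,C_4$ and possibly an extra step along the clique tree, always supplies enough independent vertices to populate the seven roles of the butterfly. I expect a further sub-case split based on whether the two tree edges carrying $S_1$ and $S_2$ are adjacent in $\mathcal{T'}$ and on the cardinalities $|S_1|$ and $|S_2|$, treating each configuration separately much as in the proof of Lemma~\ref{disjuntos}.
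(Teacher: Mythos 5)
The paper omits its own proof of this lemma, so your argument has to stand on its own. Your forward direction is fine: the gem's separators $\{u,v_2\},\{u,v_3\}$ and the butterfly's separators $\{c,u_2\},\{c,v_2\}$ are indeed incomparable, so an induced gem or butterfly refutes $i)$ directly.

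The reverse direction has two concrete failures. First, the dichotomy ``$S_1\cap S_2\neq\emptyset$ yields a gem, $S_1\cap S_2=\emptyset$ yields a butterfly'' is wrong in its first half, and the butterfly itself is the counterexample: its overlapping separators $\{c,u_2\}$ and $\{c,v_2\}$ meet in $\{c\}$, yet the butterfly contains no gem (its center is universal, hence lies on no induced $P_4$, and deleting it leaves a $2P_3$, which is $P_4$-free). In your construction this surfaces as the unjustified edge $xy$: when the two tree edges carrying $S_1$ and $S_2$ are not adjacent, $x\in C_1\cap C_2$ and $y\in C_3\cap C_4$ need not be adjacent, so $\{a,x,y,b\}$ need not be a $P_4$. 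The correct split in the overlap case is by whether the two tree edges share a clique (adjacent edges give the gem; distant ones give the butterfly, with a vertex $w\in S_1\cap S_2$ serving as the butterfly's universal center and $x,y$ as the wing midpoints), not by whether $S_1\cap S_2$ is empty. Second, the disjoint case that you flag as ``the main obstacle'' is not merely hard but impossible: $2P_3$ is chordal and (gem, butterfly)-free, yet its two minimal separators are disjoint nonempty singletons and hence incomparable, so no case analysis will extract a butterfly there. This shows that $i)$ as printed cannot be equivalent to $ii)$; indeed items $(iv)$ and $(vi)$ of the paper's closing theorem assign ``disjoint or comparable'' to (gem, butterfly)-free and ``pairwise comparable'' to $(2P_3,P_4)$-free, so the intended hypothesis $i)$ here must also allow $S_i\cap S_j=\emptyset$. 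Under that corrected reading your disjoint case disappears entirely, and the work that remains is precisely the overlap analysis described above.
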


Let $\mathcal{H}$  be a hereditary subclass of chordal graphs. Let $G\in \mathcal{H}$, $\mathcal{T}$ be a clique tree of $G$  and $\mathbf{S}(G)$ be the multiset of minimal vertex separators of $G$. For each pair $S_i,S_j \in \mathbf{S}(G)$ one of the following situations  holds:
\begin{itemize}
\item
(a) \textbf{Disjunction:}  $S_i\cap S_j=\emptyset$.
\item
(b) \textbf{Equality:}   $S_i=S_j$.
\item
(c) \textbf{Containment:}  $S_i \subset S_j$ or $S_j \subset S_i$.
\item
(d) \textbf{Overlap:}  $S_i\nsubseteq S_j$ and $S_j\nsubseteq S_i$.
\end{itemize}

\begin{remark}
  Since we are interested in a hereditary class  $\mathcal{H}$, we can note that,  by Lemma \ref{separador}, if we have Containment then we must allow  Equality and  if we have  Overlap then we must allow  Equality and  Disjunction.
\end{remark}

\begin{remark}
\label{free}
And again by  hereditarity we can note that if a class is claw-free then it is dart-free;  if it is $P_4$-free  then it is gem-free and if it is dart-free or $2P_3$ -free  then it is butterfly-free.
\end{remark}

Hence all possible combinations of properties $(a)-(d)$ are characterized in the following theorem.
\begin{theorem}
Let $G$  be a  chordal graph. Then for every $ G'$  induced subgraph of $G$   and for  every pair $S_i, S_j \in \mathbf{S}(G'),  i\neq j$, we have:
\begin{itemize}
\item (i)   $S_i\cap S_j=\emptyset \Leftrightarrow$ $G$ is (claw, gem)-free.
\item (ii)   $S_i=S_j$ $\Leftrightarrow$ $G$ is ($P_4$, gem, butterfly)-free.
\item (iii)  $S_i\cap S_j=\emptyset$ or $S_i=S_j$ $\Leftrightarrow$ $G$ is (dart, gem)-free.% (strictly chordal graphs).
\item (iv)   $S_i\cap S_j=\emptyset$ or $S_i=S_j$ or $S_i \subset S_j$ or $S_j \subset S_i \Leftrightarrow$ $G$ is (gem, butterfly)-free.
\item (v)   $S_i\cap S_j=\emptyset$ or $S_i=S_j$ or $(S_i\nsubseteq S_j$ and $S_j\nsubseteq S_i)$ $\Leftrightarrow$ $G$ is dart-free.
\item (vi)    $S_i=S_j$ or $S_i\subset S_j$ or $S_j\subset S_i\Leftrightarrow$ $G$ is $(2P_3, P_4)$-free.
 \end{itemize}
 \end{theorem}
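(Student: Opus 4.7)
The plan is to derive each of the six equivalences by combining the four characterization Lemmas \ref{disjuntos}, \ref{iguais}, \ref{cont.estrito}, \ref{n.contido} with the hereditary simplifications collected in Remark \ref{free}. The key observation is that any pair of minimal separators satisfies exactly one of the mutually exclusive cases (a), (b), (c), (d); therefore a statement of the form ``for every induced subgraph and every pair $S_i, S_j$, one of a prescribed subset of (a)-(d) holds'' is equivalent to requiring that none of the excluded cases ever occurs. Each such excluded case is characterized by a list of forbidden induced subgraphs via one of the four lemmas.

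I would proceed case by case through (i)-(vi). For each, I first translate the set of forbidden cases into the conjunction of the corresponding forbidden-induced-subgraph conditions from the lemmas, obtaining a raw list. I then apply Remark \ref{free} to prune redundant entries: whenever the absence of one forbidden subgraph already forces the absence of another, the latter is dropped. For instance, in (i) only (a) is allowed, so Lemmas \ref{iguais}, \ref{cont.estrito}, \ref{n.contido} yield the raw list (claw, dart, gem, butterfly)-free; Remark \ref{free} collapses this to (claw, gem)-free, since claw-free $\Rightarrow$ dart-free $\Rightarrow$ butterfly-free. Analogous reductions, using $P_4$-free $\Rightarrow$ gem-free and $2P_3$-free $\Rightarrow$ butterfly-free, handle cases (ii)-(vi).

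The converse direction of each equivalence is immediate from the ``$\Leftarrow$'' parts of the lemmas: if $G$ avoids the listed induced subgraphs, then in every induced subgraph every pair of minimal separators avoids each excluded case, so the stated positive property holds.

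The main obstacle is the case-by-case bookkeeping: for each of (i)-(vi) one must combine the implications of the four lemmas and verify that the reductions of Remark \ref{free} produce exactly the minimal list stated in the theorem. Most reductions are routine, but the cases that forbid several basic relations simultaneously, such as (ii) and (iii), require chaining multiple implications between the forbidden subgraphs before the list reaches its final minimal form.
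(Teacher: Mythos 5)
Your overall strategy is the same as the paper's: the theorem is to be obtained by intersecting the classes characterized in Lemmas \ref{disjuntos}--\ref{n.contido} and pruning the resulting lists of forbidden subgraphs via Remark \ref{free}. For items (i), (iii), (v) and (vi) the bookkeeping you describe does go through as you say. But there is a concrete failure at item (ii), which you wave off as routine chaining. The condition of (ii) excludes disjunction, containment and overlap, so your raw list is $(P_4, 2P_3, \mathrm{dart}, \mathrm{gem}, \mathrm{butterfly})$; Remark \ref{free} lets you delete the gem (it contains a $P_4$) and the butterfly (it contains a $2P_3$ and a dart), leaving $(P_4, 2P_3, \mathrm{dart})$-free --- not the $(P_4, \mathrm{gem}, \mathrm{butterfly})$-free of the statement. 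No further implication can close this gap: the dart and $2P_3$ are themselves chordal and are $P_4$-free, gem-free and butterfly-free, yet the dart has one minimal separator strictly contained in the other and $2P_3$ has two disjoint minimal separators, so both violate ``every pair of separators is equal.'' Carried out honestly, your method therefore produces a different list for (ii) than the one you claim to reach, and the chain of implications you invoke for that case does not exist.

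A second, smaller point: your framework assumes each of the four relations (a)--(d) has its own ``never occurs'' characterization, with Lemma \ref{n.contido} supplying the one for overlap. But the condition in Lemma \ref{n.contido} is literally ``$S_i\subseteq S_j$ or $S_j\subseteq S_i$ for every pair,'' which also forbids disjoint pairs (minimal separators are nonempty); as written it is the condition of item (vi), not ``no overlap.'' You must first reinterpret that lemma as characterizing the exclusion of overlap alone before item (iv) becomes its direct consequence; otherwise ``case (d) never occurs'' is not expressible as a conjunction of the four lemma conditions at all, and your translation step breaks down for (iv) as well.
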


\begin{proof}
 It follows from Lemmas \ref{disjuntos}-\ref{n.contido} and Remark \ref{free}.
\end{proof}

We remark that $(iii)$ had been previously proved in ~\cite{kennedy} and ~\cite{pablo2}.

We now move our attention to Helly Property. Let $\mathcal{F}$  be a family of subsets of a set $\mathcal{S}$. We say that $\mathcal{F}$ satisfies the Helly property when
%every subfamily of $\mathcal{F}$ consisting of pairwise intersecting subsets has  non-empty intersection.
every subfamily $\mathcal{F}'$ of $\mathcal{F}$ consisting of pairwise intersecting subsets satisfies
$\cap_{F\in\mathcal{F}'} F \not= \emptyset$.

We say that a set $X$ is a witness that the Helly property does not hold if:
\begin{itemize}
\item
$S_i \cap S_j \neq \emptyset$, $\forall S_i,S_j \in X$ and
\item
$\bigcap_{S_i \in  X}S_i = \emptyset$.
\end{itemize}

\begin{lemma}
%Let $G$ be a chordal graph such that $\mathbf{S}(G)$ does not satisfy the Helly property and $G$ is minimal in relation to this property with respect to induced subgraphs. 
Let $G$ be a chordal graph such that $\mathbf{S}(G)$ does not satisfy the Helly property and such that for every induced subgraph $G'$ the family $\mathbf{S}(G')$ satisfies the Helly property. 
Let $\mathcal{T}$ be a clique tree of $G$ and $\mathbf{S_{\ell}}(T)$ be the set of minimal vertex separators incident to leaves of $\mathcal{T}$. Then $\mathbf{S_{\ell}}(T)$ is a  witness.
\end{lemma}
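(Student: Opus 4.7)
The plan is to exploit the minimality of $G$ as a counter-example to Helly: the deletion of any single vertex yields a chordal graph whose family of minimal separators satisfies Helly. I will use this to force every leaf separator of $\mathcal{T}$ into any witness of non-Helly, and then invoke the subtree property of clique trees to rule out a common element of all leaf separators.

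First I would observe that every leaf clique $C$ of $\mathcal{T}$, with unique neighbour $C'$ and edge label $S=C\cap C'$, contains some vertex $v\in C\setminus S$ (by maximality of $C$), and that $v$ is simplicial with $C$ as its unique maximal clique: indeed the cliques of $G$ containing $v$ form a subtree of $\mathcal{T}$ that includes $C$ but excludes $C'$, and since $C$ is a leaf this subtree reduces to $\{C\}$. Deleting $v$ either shrinks $C$ to $C-v$ (if $|C\setminus S|\ge 2$) or removes the leaf $C$ from the clique tree altogether (if $|C\setminus S|=1$); in both cases a direct check on the edge labels shows that $\mathbf{S}(G-v)\supseteq\mathbf{S}(G)\setminus\{S\}$.

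Next, I would show that every witness $W\subseteq\mathbf{S}(G)$ for non-Helly must contain every leaf separator. Indeed, if some $S\in\mathbf{S_{\ell}}(T)$ were missing from $W$, then $W\subseteq\mathbf{S}(G)\setminus\{S\}\subseteq\mathbf{S}(G-v)$ for the vertex $v$ associated with $S$, so $W$ would remain a witness in $\mathbf{S}(G-v)$, contradicting the hypothesis that every proper induced subgraph of $G$ has the Helly property. Since $\mathbf{S}(G)$ fails Helly by assumption, at least one witness exists, and it must contain $\mathbf{S_{\ell}}(T)$; in particular the elements of $\mathbf{S_{\ell}}(T)$ are pairwise intersecting, which is the first defining property of a witness.

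For the empty common intersection I would argue by contradiction: suppose some vertex $x$ lies in every leaf separator. The maximal cliques of $G$ containing $x$ form a subtree $\mathcal{T}_x$ of $\mathcal{T}$, and since $x\in S=C\cap C'$ for every leaf $C$ of $\mathcal{T}$ we have $C\in\mathcal{T}_x$ for every such leaf. Because every edge of a tree lies on a path joining two of its leaves, any subtree of $\mathcal{T}$ containing all leaves is all of $\mathcal{T}$; hence $x$ belongs to every maximal clique, and therefore to every minimal separator of $G$, so $\{x\}$ already shows that $\mathbf{S}(G)$ satisfies Helly, a contradiction. The subtle step is this last one: the pairwise-intersection property is essentially a repackaging of the minimality hypothesis via vertex removal, whereas the empty-intersection property genuinely requires combining the assumption that $x$ lies in every leaf separator with the clique-tree subtree property to force $x$ into every clique.
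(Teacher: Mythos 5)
Your proof is correct and follows essentially the same strategy as the paper's: show that any witness must use all leaf separators (so they pairwise intersect), then use the clique-tree subtree property to show a common vertex of all leaf separators would lie in every maximal clique, hence every separator, contradicting the failure of Helly. In fact your version is more complete than the paper's, since you make explicit (via deletion of a simplicial vertex private to a leaf clique) why a witness avoiding a leaf separator survives in a proper induced subgraph, a step the paper only asserts.
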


\begin{proof}
If there exist separators $S_i,S_j \in \mathbf{S_{\ell}}(T)$ with $S_i \cap S_j = \emptyset$ then $S_i,S_j$ cannot be simultaneously in a witness. But then there exists witness $R \subset \mathbf{S}(G) \backslash S_i$ or $R \subset \mathbf{S}(G) \backslash S_j$, wich contradicts the minimality of $G$. Then suppose that  $S_i \cap S_j \neq\emptyset$ for all pairs $S_i,S_j \in \mathbf{S_{\ell}}(T)$. Since $\mathbf{S_{\ell}}(T)$ is not a witness there exists $x \in \bigcap_{S_i \in  \mathbf{S_{\ell}}(T)}S_i$ and then $x$ is universal in $G$ and this implies that $x$ belongs to all minimal vertex  separators of $G$. But this implies that $\mathbf{S}(G)$ satisfies the Helly property, a contradiction. 
\end{proof}

\begin{theorem}
Let $\mathcal{G}$ be the hereditary class of chordal graphs such that $\forall G \in \mathcal{G}$,   $\mathbf{S}(G)$ satisfies the Helly property. Then for any chordal graph $G$,
%Let $\mathcal{G}$ be a hereditary class of chordal graphs. Then $\forall G \in \mathcal{G}$,   $\mathbf{S}(G)$ satisfies the Helly property $\Leftrightarrow$ G is Haj\'{o}s-free.
 $ G \in \mathcal{G} \Leftrightarrow$ G is Haj\'{o}s-free.
\end{theorem}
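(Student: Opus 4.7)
For $(\Rightarrow)$, I would verify directly that the Haj\'{o}s graph itself lies outside $\mathcal{G}$. Its four maximal cliques are the triangles $\{A,B,C\}$, $\{B,C,D\}$, $\{C,D,E\}$ and $\{B,D,F\}$, with $\{B,C,D\}$ the central node of any clique tree, so $\mathbf{S}$ of the Haj\'{o}s graph equals $\{\{B,C\},\{C,D\},\{B,D\}\}$. These three sets pairwise intersect in a single vertex of the central triangle but have empty common intersection, so Helly fails. Since $\mathcal{G}$ is hereditary, no $G\in\mathcal{G}$ can contain Haj\'{o}s as an induced subgraph.

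For $(\Leftarrow)$, I would argue the contrapositive. Assume $G$ is chordal and $G\notin\mathcal{G}$, and choose an induced subgraph $G'$ of $G$ of minimum order with $\mathbf{S}(G')$ failing the Helly property. The preceding lemma then applies to $G'$: for any clique tree $\mathcal{T}$ of $G'$, the family $\mathbf{S_{\ell}}(\mathcal{T})$ is itself a witness, consisting of pairwise intersecting separators with empty joint intersection, so $\mathcal{T}$ has at least three leaves.

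The core task is to exhibit a Haj\'{o}s as an induced subgraph of $G'$. The first step is to produce three leaves $L_1,L_2,L_3$ of $\mathcal{T}$ whose separators $S_1,S_2,S_3$ satisfy $S_1\cap S_2\cap S_3=\emptyset$: when $|\mathbf{S_{\ell}}(\mathcal{T})|=3$ this is immediate, and when there are more leaves, minimality of $G'$ should force such a triple to exist, since otherwise a vertex lying in all but one of the leaf separators could be deleted while preserving a Helly-violating configuration, contradicting minimality. Once the three leaves are fixed, I pick distinct vertices $b\in S_1\cap S_2$, $c\in S_2\cap S_3$, $d\in S_1\cap S_3$, distinctness following from the empty triple intersection. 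Because the cliques containing any fixed vertex of $G'$ form a subtree of $\mathcal{T}$, each of $b,c,d$ lies in every clique on the path of $\mathcal{T}$ between the two leaves whose separators contain it, and in particular in the central clique $M$ where the three tree-paths from $L_1,L_2,L_3$ meet, so $\{b,c,d\}$ induces a triangle in $G'$. Finally, for each $i$ I select $a_i\in L_i\setminus S_i$, which exists because $L_i$ is a maximal clique distinct from its unique neighbour in $\mathcal{T}$; the subtree property then forces $a_i$ to lie only in $L_i$, so its neighbours in $G'$ are exactly $L_i\setminus\{a_i\}$. Consequently $a_1,a_2,a_3$ are pairwise non-adjacent, and $L_i\cap\{b,c,d\}$ has exactly two elements (the third would lie in $S_1\cap S_2\cap S_3=\emptyset$). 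These facts together show that $\{a_1,a_2,a_3,b,c,d\}$ induces the Haj\'{o}s graph in $G'$, and hence in $G$, contradicting Haj\'{o}s-freeness. The main obstacle is the reduction step yielding three leaves with empty triple-separator intersection, which requires a careful analysis of how vertex deletion interacts with the clique-tree structure of a minimum violator.
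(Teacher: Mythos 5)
Your forward direction is exactly the paper's argument and is fine. The problem is in the backward direction, at precisely the step you flag as "the main obstacle": the existence of three leaf separators with $S_1\cap S_2\cap S_3=\emptyset$. This is a genuinely stronger requirement than what a witness provides, and your proposed reduction does not close it. A family can be pairwise intersecting with empty global intersection while every \emph{triple} still has nonempty intersection (e.g.\ $S_i=\{1,2,3,4\}\setminus\{i\}$ for $i=1,\dots,4$), so "immediate when there are three leaves, forced by minimality otherwise" is not a proof. Moreover, the deletion argument you sketch presupposes a vertex lying in all but one of the leaf separators; families exist with no such vertex (take $S_1,\dots,S_6$ where the elements are the $3$-subsets of $\{1,\dots,6\}$ and $S_i$ consists of those containing $i$: every triple of sets intersects, every element lies in only three of the six sets). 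You would have to rule out all such configurations arising as leaf separators of a minimum violator, and nothing in your sketch does so. Even if the deletion idea worked for some configurations, you would also need to verify (via Lemma~\ref{separador}) that the truncated separators survive as minimal separators of $G'-v$ and still form a witness there.

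The paper sidesteps this entirely. From minimality it extracts a weaker but sufficient structural fact: for any $A,B,C\in\mathbf{S}_{\ell}(\mathcal{T}')$ one has $A\cap B\not\subseteq C$ (otherwise dropping $C$ from the family would still give empty intersection, so a proper subfamily, hence a proper induced subgraph, would already violate Helly). It then fixes $x\in S_1\cap S_2$, chooses $S_3$ with $x\notin S_3$ (possible since the global intersection is empty), and uses the non-containment fact to pick $y\in (S_1\cap S_3)\setminus S_2$ and $z\in (S_2\cap S_3)\setminus S_1$. These three vertices play the role of your $b,c,d$ without any assumption that $S_1\cap S_2\cap S_3=\emptyset$; the rest of your construction (exclusive leaf vertices, the subtree property forcing the right adjacencies and non-adjacencies) then goes through verbatim. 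So the final assembly of the Haj\'os subgraph in your write-up is correct, but the triple you build it from must be obtained as in the paper, not via the empty-triple-intersection reduction.
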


\begin{proof}
Let $G$ $\in  \mathcal{G}$ and let $xyzabc$  be an induced Haj\'{o}s of $G$, with cliques $C_1=\{x,y,a\}, C_2=\{x,z,b\}, C_3=\{y,z,c\}, C_4=\{x,y,z\}$. Let $S_1=C_1\cap C_4=\{x,y\}, S_2=C_2\cap C_4=\{x,z\}, S_3=C_3\cap C_4=\{y,z\}$. Then we have $S_1 \cap S_2=\{x\}, S_1 \cap S_3=\{y\}, S_2 \cap S_3=\{z\}$ and $S_1\cap S_2\cap S_3=\emptyset $, so $\mathbf{S}(G)$ does not satisfy the Helly property.

Conversely let  $G \not\in \mathcal{G}$ and let $G' $ be an induced  subgraph of $G$ minimal in relation to the property that $\mathbf{S}(G')$ does not satisfy the Helly property  and let  $\mathcal{T'}$ be a clique tree of $G'$. By the previous Lemma,  $\mathbf{S_{\ell}}(T')$ is a witness.
Let $S_1,S_2 \in \mathbf{S_{\ell}}(T')$.  Then $\exists x_1 \in S_1\cap S_2$ and $\exists S_i \in \mathbf{S_{\ell}}(T')$ such that $x_1 \notin S_i$. 
Without lost of generality, let $S_i=S_3$. Note that if $A,B,C \in \mathbf{S_{\ell}}(T')$ then $A\cap B \not\subseteq  C$. Indeed, suppose $A\cap B \subseteq  C$. Since $\mathbf{S_{\ell}}(T')$ is a witness, we know that
$\bigcap_{X \in  \mathbf{S_{\ell}}(T')}X = \emptyset$. Then we have: 
$$\left (\bigcap_{X \in  \mathbf{S_{\ell}}(T') \setminus \{A,B,C\}}X \right ) \cap A \cap B \cap C = \emptyset \Rightarrow $$
$$\left (\bigcap_{X \in  \mathbf{S_{\ell}}(T') \setminus \{A,B,C\}}X  \right) \cap A \cap B  = \emptyset \Rightarrow $$
$$\bigcap_{X \in  \mathbf{S_{\ell}}(T') \setminus \{C\}}X = \emptyset.$$
But this implies that there exists a proper subset of $\mathbf{S_{\ell}}(T')$ that does not satisfy the Helly property, contradicting the minimality of $G'$. Then we can suppose that $\exists y \in S_3 \cap S_1 \backslash S_2$ and $\exists z\in S_3 \cap S_2 \setminus S_1$. Let $C_1, C_2, C_3$ be the leaves of edges labeled by $S_1,S_2,S_3$ respectively in $\mathcal{T'}$. Let $v_i$ be an exclusive vertex of $C_i, i=1...3$. Then $\{x,y,z, v_1,v_2,v_3\}$  induces a Haj\'{o}s.
\end{proof}

\end{document}